\documentclass[%
 reprint,
 amsmath,amssymb,
 aps,
]{revtex4-2}

\usepackage{graphicx}
\usepackage{dcolumn}
\usepackage{bm}
\usepackage{amsmath}
\usepackage{amsfonts}
\usepackage{amsthm}
\newtheorem{theorem}{Theorem}

\newtheorem{lemma}[theorem]{Lemma}

\newtheorem{definition}[theorem]{Definition}

\usepackage[hidelinks]{hyperref}
\usepackage{physics}
\usepackage{color}
\usepackage{float}

\begin{document}

\preprint{APS/123-QED}

\title{No-Go Theorem for Gaussian Quantum Repeaters from Fractional Extendibility}

\author{Rabsan Galib Ahmed$^{1,2}$}
 \email{rgahmed@uwaterloo.ca}
\author{Graeme Smith$^{1,2}$}%
 \email{graeme.smith@uwaterloo.ca}
\affiliation{$^1$Institute for Quantum Computing, University of Waterloo, Ontario N2L 3G1, Canada.\\
$^2$Department of Applied Mathematics, University of Waterloo, Ontario N2L 3G1, Canada.}%




\date{\today}

\begin{abstract}
         Photon loss in optical channels fundamentally limits long-range reliable quantum communication. A standard approach to overcoming this limitation is the use of quantum repeater nodes, which typically perform experimentally demanding non-Gaussian operations. However, whether  Gaussian repeater protocols can enhance quantum communication rates over bosonic attenuation channels has remained open. In this work, we prove a no-go theorem for Gaussian quantum repeaters in a quantum network. Specifically, we show that any repeater chain composed of Gaussian operations, homodyne measurements, and arbitrary classical communication cannot enhance the quantum capacity of a pure-loss attenuation channel beyond that achievable by direct transmission. Our proof introduces a generalisation of $k$-extendibility to a notion of \textit{fractional extendibility} for Gaussian states and establishes some of its useful properties, thereby providing a powerful framework for analysing Gaussian quantum networks.
\end{abstract}

\maketitle


Practical realization of quantum technologies such as quantum key distribution~\cite{Bennett2014} and quantum networks~\cite{Kimble2008} requires reliable long-distance transmission of quantum states. The leading physical platform for such communication is provided by optical bosonic channels, including optical fibers and free-space links. However, quantum communication over optical channels is fundamentally limited by photon loss, which induces exponential attenuation with distance and severely suppresses the rate of transmissible quantum information~\cite{Takeoka2014,Pirandola2017}. Overcoming this loss-induced decay, which determines the ultimate rate at which quantum information, entanglement, or secret key can be distributed across optical networks, is a central challenge of quantum networking.

Bosonic attenuation channels, which model photon-loss, are characterized by their transmissivity~$\eta \in [0,1]$, which quantifies the fraction of input energy transmitted through the channel~\cite{Weedbrook2012}. The ultimate rate at which quantum information can be reliably transmitted over arbitrarily many uses of a quantum channel is defined as its quantum capacity. For the pure-loss bosonic attenuation channel $\mathcal{A}_\eta$, the quantum capacity is given by $\mathcal{Q}(\mathcal{A}_{\eta}) = \max\{0,\log_2\frac{\eta}{1-\eta}\}$, under an unconstrained input-energy assumption~\cite{Giovannetti2014,Holevo2001,Wolf2007}. Hence, in a standard optical fiber, operating at a loss rate $0.2$ dB/km~\cite{Pirandola2019}, it is impossible to perform one-way quantum communication at a positive rate beyond $15$ km.

The vanishing quantum capacity of attenuation channels with transmissivity $\eta \leq 1/2$ can be immediately understood by noting that such a channel is $2$-extendible. A bipartite quantum state $\rho_{AB}$ (and its isomorphic channel) is said to be $k$-extendible~\cite{Li_2018} with respect to subsystem $B$ if there exists a quantum state $\rho_{AB_1B_2\dots B_k}$ such that $\rho_{AB} = \rho_{AB_j} = \Tr_{B^k\setminus B_j }[\rho_{AB_1B_2\dots B_k}]$. If the receiver could reliably decode any arbitrary quantum state sent over these noisy channels, so could the environment. This is in contradiction with the \textit{no-cloning} theorem~\cite{Wootters1982}, leading to zero quantum capacity of these channels.


The primary approach to overcoming this difficulty in optical communication is the use of quantum repeater stations inserted between successive segments of attenuation channels. In principle, such repeaters can suppress the exponential decay of transmissivity with distance by repeatedly decoding, correcting, and re-encoding the transmitted quantum information, thereby enabling communication rates beyond those achievable by direct transmission alone~\cite{Briegel1998,Kimble2008,Pirandola2017,Azuma2023}. 

In optical platforms, however, experimentally accessible operations are predominantly Gaussian.  These include linear optics, squeezers, homodyne measurements, feedforward, and Gaussian ancillae. A major obstacle is the no-go theorem for Gaussian quantum error correction, which states that Gaussian errors (including attenuation) on Gaussian states cannot be corrected using only Gaussian encoding and Gaussian recovery operations~\cite{Niset2009}. Closely related no-go results also exclude Gaussian entanglement distillation using Gaussian local operations and classical communication alone~\cite{Eisert2002,Fiurasek2002,GC2002}. Therefore, it seems that repeater architecture for attenuation channels would need to incorporate non-Gaussian resources or operations~\cite{Azuma2023}, which are usually experimentally demanding and comparatively resource intensive~\cite{Weedbrook2012,Ra_2019}. However, the fundamental question remains: \textit{Can Gaussian repeater protocols enhance quantum communication rates over bosonic attenuation channels?}

In Ref.~\cite{Namiki2014}, it was shown that Gaussian regenerative stations based on ancillary Gaussian-state preparation, joint Gaussian unitary interactions, and unconditional tracing over ancillary modes cannot improve the capacity of attenuation channels. Crucially, that framework excludes an essential component of quantum repeaters: \textit{measurements}. A Gaussian quantum repeater protocol should more generally include adaptive Gaussian operations involving homodyne or heterodyne measurements on ancillary modes, feedforward-conditioned Gaussian transformations, and forward transmission of classical side information to subsequent repeater nodes. As remarked in Ref.~\cite{Namiki2014}, it therefore remained open whether such fully general Gaussian repeater protocols could enhance quantum communication rates over attenuation channels.

In this work, we prove a no-go theorem for Gaussian quantum repeaters in a quantum network. More specifically, we show that the quantum capacity of a pure-loss attenuation channel cannot be enhanced by inserting repeater nodes capable of performing Gaussian operations including measurements, assisted by arbitrary classical communication among the repeater nodes. To prove this, we introduce a generalisation of Gaussian $k$-extendibility to \textit{fractional extendibility}. We show that it is monotonic under arbitrary Gaussian operations, exhibits a simple composition rule, and provides bounds on the rate of quantum communication. We expect the framework established in this paper to have applications in analysing further complicated Gaussian quantum network.      

\textit{Preliminaries ---} An $N$-mode bosonic system is defined by $N$ pairs of \textit{canonical observables}, $\mathbf{R}:=(x_1,p_1,\dots x_N,p_N)^T$, satisfying the canonical commutation relation (CCR), $[R_j,R_k] = i\Omega_{jk}$, where $\Omega := \begin{pmatrix}
    0 & 1\\
    -1 & 0
\end{pmatrix}^{\oplus N}$. 
Given a quantum state $\rho$ of an $N$-mode bosonic systems, the first moment and the quantum covariance matrix (QCM) are defined by $d_j = \Tr(\rho R_j)$ and $V_{jk} = \Tr[(R_jR_k+R_kR_j)\rho] - d_jd_k$ respectively. The uncertainty principle implies that every QCM must satisfy, $V\geq i\Omega$. A particularly important class of bosonic states, called the \textit{Gaussian states} are defined to be thermal states of Hamiltonians that are quadratic in $R_j$'s with eigenvalues bounded from below. Moreover, these states are entirely determined by their first moment and QCM~\cite{Simon1994,Serafini2017}.  

Quantum operations preserving Gaussianity are called Gaussian operations. Operationally, they can be realized using ancillary Gaussian states, Gaussian unitaries generated by quadratic Hamiltonians, and homodyne measurements. Important examples of Gaussian unitaries include phase shifts, squeezing transformations, beam splitters, and two-mode squeezers~\cite{Weedbrook2012}. Linear Bosonic Gaussian channels correspond to Gaussian operations where measurement outcomes are discarded~\cite{Holevo2001}. These are the channels that has been considered in~\cite{Namiki2014} as Gaussian regenerative stations.

 An important example of a linear Gaussian channel is the pure-loss attenuation channel, $\mathcal{A}_\eta$. Mathematically, it is realized by mixing the input modes with some environmental vacuum modes on a beam splitter of transmissivity $\eta$, followed by tracing out the environment~\cite{eisert2005}. Physically, it arises when a fraction $1-\eta$ of a signal is absorbed. Its action on Gaussian states is $V\mapsto \eta \;V+(1-\eta)\mathbb{I}$ and $d\mapsto \sqrt{\eta}\;d$. 

Any Gaussian operation $\mathcal{G}_{A\to B}$ admits a Gaussian Choi-Jamio\l{}kowski representation obtained from its action on tensor products of two-mode squeezed vacuum (TMSV) states~\cite{LKAW2019}. Specifically for an operation with $N$-mode input, the Choi-Jamio\l{}kowski operator is given by, $\rho^{\mathcal{G}}_{AB}(r) = \mathcal{G}_{A'\to B}(\ketbra{\psi_r}{\psi_r}_{AA'}^{\otimes N})$, where $\ket{\psi_r}:=\mathrm{sech}{(r)}\sum_{n}(\tanh(r))^n\ket{n}\ket{n}$ are called the two-mode squeezed vacuum (TMSV) states with squeezing parameter $r>0$. As TMSV states are Gaussian, so are $\rho^{\mathcal{G}}_{AB}(r)$~\footnote{If $\mathcal{G}$ is trace non-increasing, then $\rho^{\mathcal{G}}_{AB}(r)$ would be an unnormalised Gaussian state}. The action of $\mathcal{G}$ on some input Gaussian state with QCM and first moments, $V_A$ and $d_A$ respectively is given by~\cite{GC2002} 
\begin{align*}
    V_A\mapsto V'_B &= B - \tilde{C}^T (V_A+\tilde{A})^{-1} \tilde{C}\\
    d_A\mapsto d'_B &= D_B + \tilde{C}^T (V_A+\tilde{A})^{-1}(D_A+d_A),
\end{align*}
where $\Gamma_{AB}\equiv \begin{pmatrix}
    A & C\\
    C^T & B
\end{pmatrix},D\equiv \begin{pmatrix}
    D_A\\D_B
\end{pmatrix}$ are respectively the QCM and the first moments of the Choi-Jamio\l{}kowski operator, $\rho^{\mathcal{G}}_{AB}(r)$ in the limit $r\to \infty$, and we have denoted $\tilde{\Gamma}_{AB}:=  (\Lambda_A \;\oplus \mathbb{I}_{B}) \;\Gamma_{AB} \; (\Lambda_A \;\oplus \mathbb{I}_{B})$, where $\Lambda_A = \bigoplus_{i=1}^N \mathrm{diag}(1,-1)$.

Schur complement is an important mathematical tool for analysing Gaussian quantum states and dynamics~\cite{LHAW2016,LSA2018}. Consider a Hermitian matrix $M$ divided in blocks as
    \begin{align*}
        M:= \begin{pmatrix}
            A & X\\
            X^\dagger & B
        \end{pmatrix},
    \end{align*}
    in which $A$ is square and non-singular. $M$ is positive semidefinite, i.e., $M \geq 0$ if and only if the \textit{Schur complement of $A$ in $M$}, $M/A := B - X^\dagger A^{-1}X \geq 0$. Additionally $M>0$ if and only if $A>0$ and $M/A >0$~\cite{Bhatia2015}. Moreover, for any positive semidefinite matrix $M$ partitioned as above, the Schur complement $M/A := B - X^\dagger A^{-}X$ is well-defined for any generalised inverse of $A$. In particular it is well-defined if the inverse is taken on the support of $A$~\cite{Zhang2005}. 

    Consider Hermitian matrices $M,M'$ partitioned similarly as above with corresponding blocks $A$ and $A'$. Two important results involving Schur complements are~\cite{Horn1985}: (i) \textit{Monotonicity}: If $M\geq M'$, then $M/A\geq M'/A'$. (ii) \textit{Joint concavity}: If $A$ and $A'$ are positive-definite, $[(1-x)M+xM']/[(1-x)A+xA'] \geq (1-x) M/A + x M'/A'$ where $x \in [0,1]$.

 \textit{Main results ---} Suppose $\mathcal{A}_\eta$ is an $N$-mode Gaussian attenuation channel with transmissivity $\eta$. Further, suppose that forward quantum communication between two users over $\mathcal{A}_\eta$ is assisted by arbitrary finite number of intermediate repeater nodes, where each node is restricted to performing arbitrary Gaussian operation. Moreover, arbitrary classical communication is allowed among the repeater nodes. As a special case, our results apply to the scenario where only forward classical communication is allowed which is comparable to the third generation of quantum repeaters~\cite{Azuma2023}.

 Specifically, we analyse the situation where at first the repeater nodes locally prepare some bipartite Gaussian states, possibly assisted by some classical communication among themselves. Then each node shares one part to the next node through the intermediate attenuation channel. Afterwards they perform arbitrary multipartite Gaussian local operation and classical communication (GLOCC) protocol~\cite{GC2002,Eisert2002,Fiurasek2002} to output a bipartite entangled state between the first and the last node. The first node then teleports the state it receives from the sender to the last node using that entanglement. The last node then forwards the teleported state to the receiver through the last attenuation channel (See Fig.~\ref{fig:glocc-choi state}). 

\begin{figure}
    \centering
    \includegraphics[width=1\linewidth]{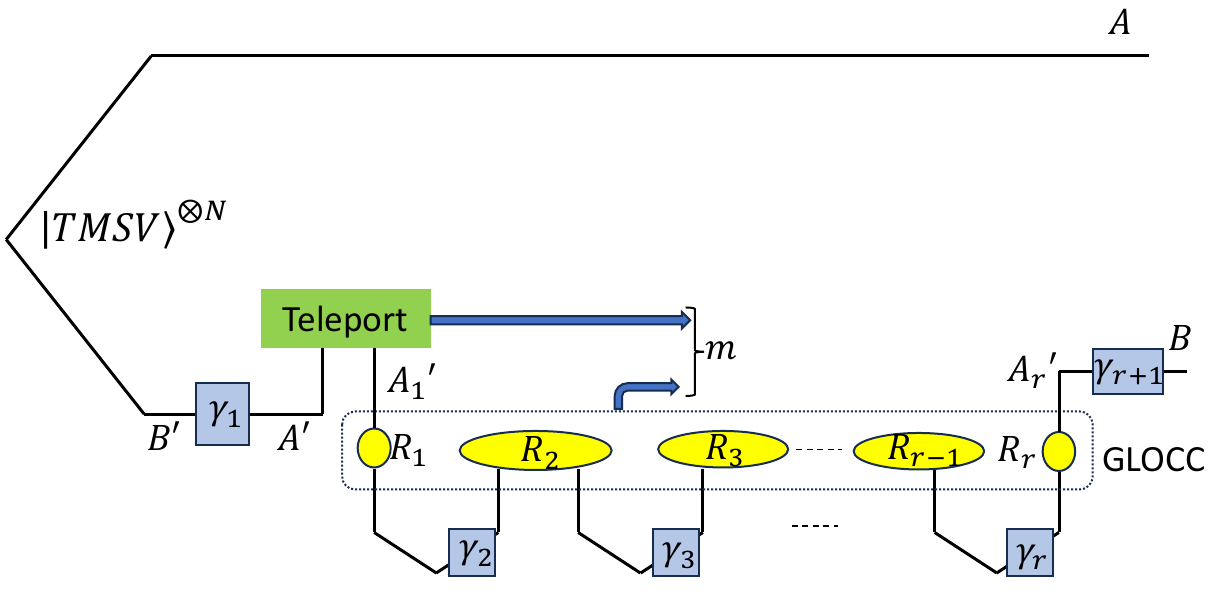}
    \caption{Choi state of the repeater chain, $\mathcal{R}_{\eta}$ corresponding to some message string $m$. $\gamma_i$'s denote attenuation parameters of the intermediate pure loss channels.}
    \label{fig:glocc-choi state}
\end{figure}
 
 The equivalent channel from the sender ($A$)  to the receiver ($B$) is denoted by $\mathcal{R}_\eta : A\to BM$, where $M$ is a classical register containing all the classical information produced in the entire protocol.  
 
  The main result of this paper is the following.

\begin{theorem}[No-Go theorem for Gaussian repeater chain]\label{thm:main_1}
    The quantum capacity of a Gaussian repeater chain, $\mathcal{R}_\eta$ is upper bounded by that of the attenuation channel, $\mathcal{A}_\eta$, i.e., $\mathcal{Q}(\mathcal{R}_\eta) \leq \mathcal{Q}(\mathcal{A}_\eta)$.
\end{theorem}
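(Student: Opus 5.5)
We will follow the route the abstract announces: a quantitative, non-integer generalisation of Gaussian $k$-extendibility. The target bound will be of the form $\mathcal{Q}\le \log_2\frac{\eta}{1-\eta}$ whenever the Choi state is ``$\frac{1-\eta}{\eta}$-extendible''.

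\textbf{Step 1: the definition.} For a bipartite Gaussian state with QCM $\Gamma_{AB}$, say it is \emph{$\lambda$-extendible} (with $\lambda\ge 0$ real) if there is a Gaussian state $\Gamma_{ABE}$ with $\Gamma_{AB}$ as its $AB$-marginal and with a suitably weighted $AE$-marginal. Concretely, Gaussian $k$-extendibility is characterised by a Schur complement inequality of the form $B - C^T A^{-1}C \ge (1-\tfrac1k)\,$(something)$\,+\,\tfrac1k\,i\Omega$-type conditions. We define $\lambda$-extendibility by replacing $1/k$ with a real parameter $\lambda$ in that Schur-complement condition. For the Choi state of $\mathcal{A}_\eta$, computed in the limit $r\to\infty$, the condition is saturated exactly at the value encoded by $\eta$ through the ratio $\frac{\eta}{1-\eta}$.

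\textbf{Step 2: properties.} We then prove three properties.

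(a) Monotonicity. Using the input-output formula for Gaussian operations stated above, we show that $\lambda$-extendibility is preserved under Gaussian operations. The argument applies the Schur-complement monotonicity property (i) to the composed covariance matrices. Conditioning on homodyne outcomes also preserves it: a homodyne measurement is again a Schur complement, restricted to a projected block, and the conditional QCM is independent of the outcome. Joint concavity (ii) then handles classical mixing over the register $M$, so arbitrary classical communication among the nodes is covered.

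(b) Composition. We show that the Choi states of a concatenation $\mathcal{A}_{\gamma_1},\dots,\mathcal{A}_{\gamma_n}$ interleaved with GLOCC behave multiplicatively in the odds ratio. Since $\prod\gamma_i=\eta$, the composed repeater channel $\mathcal{R}_\eta$, conditioned on each message string $m$, has a Choi state that is $\lambda$-extendible with the same $\lambda$ as $\mathcal{A}_\eta$. The argument is that each segment's loss environment can be copied, weighted so that the weights compose. Teleportation using a GLOCC-produced resource is itself a Gaussian operation plus classical communication, so (a) applies.

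(c) Capacity bound. We show that a channel $\mathcal{N}:A\to BM$ whose Choi state is $\lambda$-extendible for each classical branch, uniformly in $m$, satisfies $\mathcal{Q}(\mathcal{N})\le \mathcal{Q}(\mathcal{A}_\eta)$. One way to do this is to exhibit $\mathcal{N}$ as a degraded version of $\mathcal{A}_\eta$ composed with Gaussian post-processing: from the extendibility condition, construct a Gaussian channel $\mathcal{D}$ with $\mathcal{N}=\mathcal{D}\circ\mathcal{A}_\eta$ on each branch. The data-processing inequality, together with the flagged structure (the conditional QCM is independent of $m$, so only first moments depend on $m$ and these can be undone), then gives the bound.

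\textbf{Step 3: assembling the theorem.} Combining (a) and (b) shows that $\mathcal{R}_\eta$ inherits the fractional extendibility of $\mathcal{A}_\eta$, and (c) then yields $\mathcal{Q}(\mathcal{R}_\eta)\le\mathcal{Q}(\mathcal{A}_\eta)$.

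\textbf{Main obstacle.} The hardest step will be (c): converting a purely covariance-level extendibility inequality into an operational capacity bound. We would first try to show that the Schur-complement inequality implies the existence of an explicit Gaussian channel factorisation through $\mathcal{A}_\eta$. This requires delicate handling of the $r\to\infty$ limit and of singular blocks via generalised inverses. A secondary difficulty is making the composition rule in (b) multiplicative rather than merely subadditive.
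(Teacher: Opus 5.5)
Your skeleton matches the paper's: the definition $V_{AB}\geq i\Omega_A\oplus[(1-\gamma)\Delta_B+i\gamma\Omega_B]$ (your $1/k\to\lambda$), Schur-complement monotonicity, a multiplicative composition rule, and a branchwise factorisation through $\mathcal{A}_\eta$ finished by a bottleneck argument. The gap is step (b). It carries the whole theorem, it is asserted rather than proved, and as written its parametrisation is wrong.

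What multiplies is $\gamma$ itself. Applying $\mathcal{A}_\lambda$ to a $1/\gamma$-extendable state gives a $1/(\lambda\gamma)$-extendable one: conjugate by $\mathbb{I}_A\oplus\sqrt{\lambda}\,\mathbb{I}_B$, add $(1-\lambda)\mathbb{I}_B$, and check that $\Delta'_B=\frac{\lambda(1-\gamma)\Delta_B+(1-\lambda)\mathbb{I}_B}{1-\lambda\gamma}\geq i\Omega_B$. Odds ratios do not multiply under concatenation. Under your own definition the Choi state of $\mathcal{A}_\eta$ satisfies the condition at $\lambda=\eta$, not at $\frac{1-\eta}{\eta}$.

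More seriously, your (a) is monotonicity under Gaussian operations on one side of a fixed cut. Applied to the cut between the reference $A$ and all the nodes, together with composition for the first and last attenuators, it gives only $(\gamma_1\gamma_{r+1})^{-1}$-extendibility, because the inner lossy links lie inside that side. So ``arbitrary classical communication is covered by (a)'' does not produce $\prod_k\gamma_k$, and ``copying each segment's environment'' is not an argument.

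The paper supplies a separate lemma for this. A Gaussian CP map separable across the nodes has Choi QCM $\Gamma_{AA'}\geq\bigoplus_k\Gamma_{(AA')_k}$ (Werner--Wolf). Peeling off one link at a time with Schur-complement monotonicity and joint concavity then shows that the first--last-node state is $(\prod_{k=2}^r\gamma_k)^{-1}$-extendable. Teleportation through this resource is handled by writing it, via the factorisation lemma, as $(\mathcal{S}_{m}\circ\mathcal{A}_{\gamma_2\cdots\gamma_r})(\Psi)$. Teleporting through it is then teleporting through $\Psi$, attenuating, and applying a unitary.

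An alternative fix: for a fixed transcript, each GLOCC branch is a product of local Gaussian CP maps on a linear chain. The branch map is therefore $\mathcal{A}_{\gamma_{r+1}}\circ\mathcal{F}_r\circ\mathcal{A}_{\gamma_r}\circ\cdots\circ\mathcal{F}_1\circ\mathcal{A}_{\gamma_1}$, where $\mathcal{F}_k$ is node $k$'s branch operation with its prepared link state absorbed. Composition and monotonicity can then be applied alternately.

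Incidentally, joint concavity is not what handles mixing over $M$. The mixture is non-Gaussian, so you must work branchwise. In the paper, joint concavity is used inside the monotonicity proof to split $(1-\gamma)\Delta_B+i\gamma\Omega_B$.

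You also have the difficulty of (c) reversed, and the factorisation you aim for there generally does not exist. By Williamson, $\Delta_B\geq S\,\mathbb{I}_B\,S^T$ for some symplectic $S$. Conjugating by $\mathbb{I}_A\oplus S^{-1}/\sqrt{\gamma}$ and subtracting $\frac{1-\gamma}{\gamma}\mathbb{I}_B$ gives a valid QCM $V^\sigma$ with $\rho=\mathrm{id}_A\otimes(\mathcal{S}\circ\mathcal{A}_\gamma)(\sigma)$; this takes a few lines and needs no delicate limits.

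At the channel level this yields $\mathcal{S}_m\circ\mathcal{A}_\eta\circ\mathcal{N}_m$, with a \emph{pre}-processing CP map $\mathcal{N}_m$ whose Choi operator is $p_m\sigma^m$. It does not yield $\mathcal{D}\circ\mathcal{A}_\eta$. Take $\mathcal{A}_{\gamma_2}\circ\mathcal{S}\circ\mathcal{A}_{\gamma_1}$ with $\mathcal{S}$ a squeezer of symplectic matrix $S$: matching $\mathcal{D}\circ\mathcal{A}_{\gamma_1\gamma_2}$ forces a Gaussian $\mathcal{D}$ to add noise $(1-\gamma_2)(\mathbb{I}-SS^T)\not\geq 0$.

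The paper therefore writes $\mathcal{R}_\eta=\mathcal{U}\circ(\mathcal{A}_\eta\otimes\tilde{\mathrm{id}}_M)\circ\mathcal{V}$. It then uses $\mathcal{Q}(\mathcal{N}\circ\mathcal{M})\leq\min(\mathcal{Q}(\mathcal{N}),\mathcal{Q}(\mathcal{M}))$ together with $\mathcal{Q}(\mathcal{A}_\eta\otimes\tilde{\mathrm{id}}_M)=\mathcal{Q}(\mathcal{A}_\eta)$. Note also that in an adaptive protocol the QCM itself can depend on $m$. So $\mathcal{S}_m$ and $\sigma^m$ must be allowed to depend on $m$; undoing displacements is not enough.
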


The proof follows from a generalisation of the Gaussian $k$-extendable states~\cite{LKAW2019,Rajarama_Bhat_2017}. In~\cite{LKAW2019}, the necessary and sufficient condition for $k$-extendibility of a bipartite bosonic Gaussian state has been obtained in terms of its covariance matrix. Here we introduce the following new definition.
\begin{definition}[Fractional extendibility]
    A bipartite bosonic Gaussian state on $A|B$ with covariance matrix $V_{AB}$ is \textbf{$1/\gamma$-extendable} if there exists a covariance matrix on $B$, $\Delta_B\geq i\Omega_{B}$ such that, for $\gamma \in [0,1]$
    \begin{align}\label{eq:fract-ext}
        V_{AB} \geq i\Omega_A \oplus [(1-\gamma)\Delta_B +i\gamma \Omega_{B}].
    \end{align}
\end{definition}
With $\gamma = 1/k$ for integers $k\geq 1$, we recover the standard matrix inequality condition for $k$-extendibility in \cite{LKAW2019}. We postpone the physical meaning of Gaussian fractional extendibility till Lemma~\ref{lem:equiv-to-att}. Some crucial results regarding fractional extendibility that we need are stated and proven below.
\begin{lemma}[Composition with attenuation]\label{lem:comp-att}
    Suppose a QCM, $V_{AB}$ is $1/\gamma$-extendable. After the application of an attenuation channel, $\mathcal{A}_{\lambda}:B\to B$, the QCM, $V'_{AB}$ is $1/(\lambda\gamma)$-extendable. 
\end{lemma}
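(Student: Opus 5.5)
The plan is to work directly with the matrix inequality~\eqref{eq:fract-ext} and push it through the action of $\mathcal{A}_\lambda$ on $B$. The attenuation channel acts on the joint QCM as
\begin{align*}
V'_{AB} = (\mathbb{I}_A\oplus\sqrt{\lambda}\,\mathbb{I}_B)\,V_{AB}\,(\mathbb{I}_A\oplus\sqrt{\lambda}\,\mathbb{I}_B) + 0_A\oplus(1-\lambda)\mathbb{I}_B .
\end{align*}
The congruence by $X:=\mathbb{I}_A\oplus\sqrt{\lambda}\,\mathbb{I}_B$ preserves Hermitian matrix ordering. Applying it to the hypothesis $V_{AB}\geq i\Omega_A\oplus[(1-\gamma)\Delta_B+i\gamma\Omega_B]$ gives
\begin{align*}
V'_{AB} \geq i\Omega_A\oplus\left[\lambda(1-\gamma)\Delta_B + i\lambda\gamma\Omega_B + (1-\lambda)\mathbb{I}_B\right].
\end{align*}

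It then suffices to rewrite the $B$-block as $(1-\lambda\gamma)\Delta'_B + i\lambda\gamma\Omega_B$ for some legitimate covariance matrix $\Delta'_B\geq i\Omega_B$. The natural choice is the convex combination
\begin{align*}
\Delta'_B := \frac{\lambda(1-\gamma)\Delta_B+(1-\lambda)\mathbb{I}_B}{1-\lambda\gamma}.
\end{align*}
The weights are $\lambda(1-\gamma)/(1-\lambda\gamma)$ and $(1-\lambda)/(1-\lambda\gamma)$. Both are nonnegative and they sum to one, because $\lambda(1-\gamma)+(1-\lambda)=1-\lambda\gamma$.

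Next I check that $\Delta'_B$ is a valid QCM. Both $\Delta_B\geq i\Omega_B$ and $\mathbb{I}_B\geq i\Omega_B$ hold, the latter being the vacuum QCM. The set $\{\Delta:\Delta\geq i\Omega_B\}$ is convex, so $\Delta'_B\geq i\Omega_B$. Substituting gives exactly $V'_{AB}\geq i\Omega_A\oplus[(1-\lambda\gamma)\Delta'_B+i\lambda\gamma\Omega_B]$ with $\lambda\gamma\in[0,1]$. This is the definition of $1/(\lambda\gamma)$-extendibility.

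The only delicate point is the degenerate case $\lambda\gamma=1$, i.e.\ $\lambda=\gamma=1$. There the statement is trivial: the channel is the identity, and any $\Delta'_B$, for instance $\Delta_B$ itself, works. I do not expect a real obstacle, since the proof rests only on the congruence-preservation step and the convexity of the set of valid QCMs.
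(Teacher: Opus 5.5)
Your proposal is correct and matches the paper's proof essentially step for step. Both conjugate by $\mathbb{I}_A\oplus\sqrt{\lambda}\,\mathbb{I}_B$, add $(1-\lambda)\mathbb{I}_B$, and write $\lambda(1-\gamma)\Delta_B+(1-\lambda)\mathbb{I}_B=(1-\lambda\gamma)\Delta'_B$ with $\Delta'_B\geq i\Omega_B$. Your separate treatment of $\lambda\gamma=1$, where the paper's division by $1-\lambda\gamma$ is undefined but harmless, is a small improvement.
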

\begin{proof}
    Given  $V_{AB}$ satisfies Inequality~\eqref{eq:fract-ext}, conjugating both sides with $\mathbb{I}_A\oplus \sqrt{\lambda}\mathbb{I}_B$ and adding the noise $0_A\oplus (1-\lambda)\mathbb{I}_B$ on both sides, we obtain, $V'_{AB} \geq i\Omega_A \oplus R_B$. Here, 
    \begin{align*}
        R_B &= \lambda(1-\gamma) \Delta_B + (1-\lambda)\mathbb{I}_B +i\lambda\gamma \Omega_{B}\\
        &= (1-\lambda\gamma)\Delta'_B + i\lambda\gamma \Omega_{B}.
    \end{align*}
    It follows that $(1-\lambda\gamma)\Delta'_B = \lambda(1-\gamma) \Delta_B + (1-\lambda)\mathbb{I}_B  \geq  \lambda(1-\gamma) i\Omega_B+ (1-\lambda)i\Omega_B = (1-\lambda\gamma)i\Omega_B$ which implies that, $\Delta'_B\geq i\Omega_B$. Hence, $V'_{AB}$ is $1/(\lambda\gamma)$-extendable.
\end{proof}

\begin{lemma}[Monotonicity]\label{lem:monotonicity}
    Fractional extendibility is monotonic under Gaussian operations.
\end{lemma}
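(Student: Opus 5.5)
We have to show: if $V_{AB}$ is $1/\gamma$-extendable and $\mathcal{G}$ is a Gaussian operation (possibly conditional, with a homodyne outcome) acting on $B$, or on $A$, then the output covariance matrix is still $1/\gamma$-extendable. The plan is to decompose $\mathcal{G}$ into three elementary steps: appending Gaussian ancillas, applying a Gaussian unitary (a symplectic $S$ on the covariance matrix), and a partial homodyne measurement or partial trace. We then check that inequality~\eqref{eq:fract-ext} survives each step. Conditioning on a measurement outcome changes only the first moments, not the output covariance matrix. Since extendibility is defined purely through $V_{AB}$, the outcome-dependent displacements are irrelevant.

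\textbf{Ancillas and unitaries.} Appending an ancilla $C$ with covariance matrix $\sigma_C\ge i\Omega_C$ to the $B$ side gives
\[
V_{AB}\oplus\sigma_C\ \ge\ i\Omega_A\oplus\big[(1-\gamma)(\Delta_B\oplus\sigma_C)+i\gamma\Omega_{BC}\big],
\]
because $\sigma_C=(1-\gamma)\sigma_C+\gamma\sigma_C\ge(1-\gamma)\sigma_C+i\gamma\Omega_C$. The new extension $\Delta_B\oplus\sigma_C$ is a valid covariance matrix. An ancilla on the $A$ side is handled the same way using $\sigma\ge i\Omega$.

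Next, conjugate both sides by $S_A\oplus S_B$. Symplecticity gives $S\Omega S^T=\Omega$, so $S_A i\Omega_A S_A^T=i\Omega_A$. On the $B$ side, $S_B\big[(1-\gamma)\Delta_B+i\gamma\Omega_B\big]S_B^T=(1-\gamma)S_B\Delta_BS_B^T+i\gamma\Omega_B$, and $S_B\Delta_BS_B^T$ is again a covariance matrix. Both sides are preserved.

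\textbf{Partial trace and homodyne measurement.} A partial trace restricts the inequality to a principal submatrix, which preserves positivity and keeps both block structures. This deals with discarding modes. For a homodyne measurement of, say, $x$ on mode $m\subset B$, the conditional covariance matrix of the remaining modes is the Schur complement
\[
V\mapsto V_{\bar m}-V_{\bar m m}(\Pi V_{mm}\Pi)^{-}V_{m\bar m},\qquad \Pi=\mathrm{diag}(1,0),
\]
equivalently the limit of the Schur complement of $V_{mm}+\mathrm{diag}(\epsilon,1/\epsilon)$ as $\epsilon\to 0$.

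To use monotonicity of the Schur complement with complex Hermitian matrices, I would first add the same squeezed "measurement-ancilla" term $0\oplus\mathrm{diag}(\epsilon,1/\epsilon)$ to both sides of~\eqref{eq:fract-ext}. This makes the measured block positive definite on both sides. On the right-hand side, the added term is a covariance matrix $\tau_m$ with $\tau_m\ge i\Omega_m$, and by the ancilla argument it can be absorbed as $(1-\gamma)\tau_m+\gamma\tau_m$. Monotonicity then gives
\[
V'\ \ge\ \big(i\Omega_A\oplus[(1-\gamma)\Delta_B+i\gamma\Omega_B+\tau]\big)\big/(\cdot)_{mm}.
\]
Because $A$ is untouched, the right-hand side is block diagonal and stays $i\Omega_A\oplus(\cdots)$.

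\textbf{Main obstacle.} The hard step is showing that the $B$-part Schur complement again has the form $(1-\gamma)\Delta'+i\gamma\Omega$ with $\Delta'\ge i\Omega$. I would treat it with joint concavity of the Schur complement. Write $M_B=(1-\gamma)(\Delta_B+\tau)+\gamma(i\Omega_B+\tau)$, where $i\Omega+\tau$ should be understood as the covariance-type matrix of an ideal pure state measured by the same homodyne. Concavity gives
\[
M_B/(\cdot)\ \ge\ (1-\gamma)\,(\Delta_B+\tau)/(\cdot)+\gamma\,(i\Omega_B+\tau)/(\cdot).
\]
The first term is the post-measurement covariance matrix $\Delta'$ of a legitimate state, so $\Delta'\ge i\Omega$. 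The second term, with $\Omega$ block-diagonal across the measured and unmeasured modes, reduces to $i\Omega_{\bar m}$ in the limit $\epsilon\to0$.

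The delicate points are:
\begin{itemize}
\item concavity requires positive-definite pivot blocks, while $i\Omega+\tau$ is only semidefinite, so I would use an $\epsilon$-regularization and take limits;
\item the generalized-inverse limit must be justified.
\end{itemize}
Measurements on $A$ are easier: the $A$-block is $i\Omega_A$ plus the added $\tau$ term, and its Schur complement is $i\Omega_{\bar m}$. Finally, composing the three steps covers any Gaussian operation, and mixing over outcomes is unnecessary since extendibility holds for each branch separately.
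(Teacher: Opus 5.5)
Your argument is correct, but it is organised differently from the paper's. The paper does not decompose the operation. It represents an arbitrary Gaussian CP map $\mathcal{G}:B\to B'$ by its Choi QCM $\Gamma_{BB'}$ and writes the output as the single Schur complement $(V_{AB}+\tilde\Gamma_{BB'})/(V_B+\tilde\Gamma_B)$. It then splits $(1-\gamma)\Delta_B+i\gamma\Omega_B+\tilde\Gamma_{BB'}$ as $(1-\gamma)(\Delta_B+\tilde\Gamma_{BB'})+\gamma(i\Omega_B+\tilde\Gamma_{BB'})$, applies Schur-complement monotonicity and joint concavity once, and controls the second term using $\tilde\Gamma_{BB'}\ge(-i\Omega_B)\oplus i\Omega_{B'}$.

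Your homodyne step is exactly this argument with $\tilde\Gamma$ replaced by the measurement matrix $\tau_m$. Your route costs you two things. First, you need a structure theorem: every Gaussian operation, including general-dyne measurements and post-selected maps, is a composition of ancillas, symplectics, homodyne measurements and partial traces. Second, you must take the $\epsilon\to0$ homodyne limit. In return, every other step is a one-line matrix manipulation, and the $N$-type Schur complement is exactly $i\Omega_{\bar m}$ by block-diagonality, with no need for a bound on $\tilde\Gamma$. You also avoid the $r\to\infty$ Choi limit and cover operations on $A$ as well. The paper's one-shot Choi/joint-concavity template is, on the other hand, the one it iterates in Lemma~\ref{lem:glocc} for maps that are separable across several nodes.

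One small correction: the right-hand measured block $(1-\gamma)\Delta_{mm}+i\gamma\Omega_m+\tau_m=(1-\gamma)(\Delta_{mm}-i\Omega_m)+(\tau_m+i\Omega_m)$ is positive definite only for $\gamma<1$. A common null vector $w$ would force $w^\dagger i\Omega_m w$ to be both positive and negative. At $\gamma=1$ the block is singular, but there the statement is trivial, since every QCM satisfies the defining inequality.
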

\begin{proof}
    Suppose a bipartite $1/\gamma$-extendable state with QCM, $V_{AB}$ and the Gaussian operation $\mathcal{G}:B\to B'$ has an associated Choi operator with QCM, $\Gamma_{BB'}$. The QCM of the final state on $AB'$ is given by the Schur complement~\cite{GC2002}\footnote{Technically, we should write as  $$ [(V_{AB} \oplus 0_{B'}) + (0_A\oplus\tilde{\Gamma}_{BB'})]/(V_{B}+\tilde{\Gamma}_B).$$ However, we omit writing the zero blocks for readability and it should be understood from the labelled indices.}, \  $V'_{AB'} = (V_{AB} + \tilde{\Gamma}_{BB'})/(V_B+\tilde{\Gamma}_{B})$. Using Inequality~\eqref{eq:fract-ext} and Schur complement monotonicity \cite{LHAW2016}, we obtain, $V'_{AB'} \geq i\Omega_A \oplus R_{B'}$, where  $$ R_{B'} = [(1-\gamma)\Delta_B +i\gamma \Omega_{B}+ \tilde{\Gamma}_{BB'}]/[(1-\gamma)\Delta_B +i\gamma \Omega_{B}+ \tilde{\Gamma}_{B}].$$
    Note that $(1-\gamma)\Delta_B +i\gamma \Omega_{B}+ \tilde{\Gamma}_{BB'} = (1-\gamma)M_{BB'} +\gamma N_{BB'}$, where $M_{BB'} = \Delta_B+ \tilde{\Gamma}_{BB'}$ and $N_{BB'} = i\Omega_{B} + \tilde{\Gamma}_{BB'}$. Therefore using the joint concavity of Schur complement~\footnote{The positive definiteness of the blocks can be assured by appropriate regularisation and finally taking the limit using the fact that the set of Positive Semidefinite matrices is closed.},
    \begin{align}
        R_{B'} &= [(1-\gamma)M_{BB'} +\gamma N_{BB'}]/[(1-\gamma)M_{B} +\gamma N_{B}]\nonumber\\
        &\geq (1-\gamma)(M_{BB'}/M_{B}) + \gamma(N_{BB'}/N_{B}).
    \end{align}

    As $\Gamma_{BB'}$ is a QCM, $\tilde{\Gamma}_{BB'} \geq (-i\Omega_{B})\oplus i\Omega_{B'}$. Furthermore, using $\Delta_B\geq i\Omega_B$, it can be shown that the real symmetric matrix, $\Delta'_{B'}:= M_{BB'}/M_{B}$
    satisfies $\Delta'_{B'} \geq i\Omega_{B'}$ and $N_{BB'}/N_{B}\geq i\Omega_{B'}$. Combining, we have $R_{B'}
        \geq (1-\gamma)\Delta'_{B'} + i\gamma \Omega_{B'}$. Finally, recalling that the QCM of the state after the Gaussian operation satisfies, $V'_{AB'} \geq i\Omega_A \oplus R_{B'}$, we conclude the proof.
\end{proof}
A special remark is in order when the Gaussian operation involves measurement. In such case, the preceding lemma is to be understood as -- for each measurement outcome, the state on $AB'$ is at least as extendable as the state on $AB$. This is due to the fact that the QCM of the post-measurement state is independent of the measurement outcome~\cite{Serafini2017}.

The next result shows that any $1/\gamma$-extendable state on $AB$ can be obtained from some other bipartite Gaussian state by applying the attenuation channel, $\mathcal{A}_{\gamma}$, to the $B$ system followed by some Gaussian unitary.

\begin{lemma}\label{lem:equiv-to-att}
    For every $1/\gamma$-extendable Gaussian state, $\rho_{AB}$, there exists another Gaussian state $\sigma_{AB}$ such that $\rho_{AB} = \mathrm{id}_A\otimes (\mathcal{S}\circ \mathcal{A}_{\gamma}) (\sigma_{AB})$, where $\mathcal{S}$ denotes conjugation by some Gaussian unitary and $\mathcal{A}_{\gamma}$ denotes the attenuation channel with transmissivity $\gamma$.
\end{lemma}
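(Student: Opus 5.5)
The plan is to construct $\sigma_{AB}$ explicitly at the level of covariance matrices and first moments, by inverting the action of $\mathcal{S}\circ\mathcal{A}_\gamma$ on $B$. For $\gamma\in(0,1]$, write the symplectic matrix of the Gaussian unitary $\mathcal{S}$ as $S$ (acting on $B$). The composition $\mathcal{S}\circ\mathcal{A}_\gamma$ acts on a QCM $W_{AB}$ as
\begin{align*}
W_{AB}\mapsto (\mathbb{I}_A\oplus \sqrt{\gamma}S)\,W_{AB}\,(\mathbb{I}_A\oplus \sqrt{\gamma}S^T) + 0_A\oplus (1-\gamma)SS^T ,
\end{align*}
and on first moments as $d\mapsto (\mathbb{I}_A\oplus\sqrt{\gamma}S)d$. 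So I must choose $S$ so that the candidate preimage is a legitimate QCM.

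The key choice is $S$. Since $\Delta_B\geq i\Omega_B$, Williamson's theorem gives $\Delta_B = S D S^T$ with $D=\bigoplus_j \nu_j\mathbb{I}_2$ and all symplectic eigenvalues $\nu_j\geq 1$. Hence $\Delta_B\geq SS^T$. Take this $S$ and define
\begin{align*}
W_{AB} := (\mathbb{I}_A\oplus \gamma^{-1/2}S^{-1})\big(V_{AB}-0_A\oplus(1-\gamma)SS^T\big)(\mathbb{I}_A\oplus \gamma^{-1/2}S^{-T}),
\end{align*}
with first moments $(\mathbb{I}_A\oplus\gamma^{-1/2}S^{-1})d$. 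By construction, applying $\mathcal{S}\circ\mathcal{A}_\gamma$ to the Gaussian state $\sigma_{AB}$ with these moments returns exactly $V_{AB}$ and $d$. Because Gaussian states are determined by their first two moments, this gives $\rho_{AB}$.

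The remaining, and central, step is to verify that $W_{AB}\geq i\Omega_{A}\oplus i\Omega_B$, so that $\sigma_{AB}$ is a physical state. First, using $SS^T\leq\Delta_B$ and the fractional-extendibility inequality~\eqref{eq:fract-ext}, we get
\begin{align*}
V_{AB}-0_A\oplus(1-\gamma)SS^T \;\geq\; V_{AB}-0_A\oplus(1-\gamma)\Delta_B \;\geq\; i\Omega_A\oplus i\gamma\Omega_B .
\end{align*}
Congruence preserves matrix inequalities. Moreover, $S^{-1}\Omega_B S^{-T}=\Omega_B$ because $S^{-1}$ is symplectic. Conjugating therefore yields $W_{AB}\geq i\Omega_A\oplus i\Omega_B$.

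I expect the main subtlety to be the degenerate endpoint $\gamma=0$. There $\mathcal{A}_0$ discards $B$, so the construction as it stands breaks down, and the statement should be read for $\gamma\in(0,1]$. I would either restrict to that range or remark that it fails in the limit. A secondary point is justifying the choice of $S$ through Williamson's theorem, which needs the $\nu_j\geq 1$ bound implied by $\Delta_B\geq i\Omega_B$.
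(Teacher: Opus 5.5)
Your proof is correct and is essentially the paper's argument: Williamson's theorem gives $\Delta_B\geq SS^T$, and your $W_{AB}$ expands to exactly the paper's $V^\sigma_{AB}=(\mathbb{I}_A\oplus\gamma^{-1/2}S^{-1})V^\rho_{AB}(\mathbb{I}_A\oplus\gamma^{-1/2}S^{-T})-\frac{1-\gamma}{\gamma}\mathbb{I}_B$, and the same congruence argument shows it is a valid QCM. The differences are cosmetic: you invert the first moments directly rather than absorbing a displacement into $\mathcal{S}$, and you explicitly restrict to $\gamma>0$, which the paper's proof also tacitly needs since it divides by $\sqrt{\gamma}$ (at $\gamma=0$ the statement indeed fails, e.g.\ for a product state with a thermal $B$ marginal).
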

\begin{proof}
    We can set the first moments of $B$-system quadratures in $\rho_{AB}$ to zero without loss of generality, because a displacement operator can always be included in $\mathcal{S}$. Furthermore, $\sigma_{AB}$ can be chosen to have the same first moments of $A$-system quadratures as $\rho_{AB}$. Let the QCM for $\rho_{AB}$ be given by $V^{\rho}_{AB}$, which satisfies Inequality~\eqref{eq:fract-ext} for some QCM on $B$, $\Delta_B$. As the symplectic eigenvalues of $\Delta_B$ are greater than or equal to $1$, there exists symplectic matrix, $S$, such that, $\Delta_B \geq S \;\mathbb{I}_B\;S^T$. Therefore, we obtain, $V^\rho_{AB} \geq i\Omega_A \oplus [(1-\gamma)S \;\mathbb{I}_B\;S^T +i\gamma \Omega_{B}]$. Simple algebraic manipulations reveal that
    \begin{align}
        V^\sigma_{AB} &:= \left(\mathbb{I}_A\oplus \frac{S^{-1}}{\sqrt{\gamma}}\right)V^\rho_{AB}\left(\mathbb{I}_A\oplus \frac{(S^{-1})^T}{\sqrt{\gamma}}\right) -  \frac{1-\gamma}{\gamma}\mathbb{I}_B \nonumber\\
        &\geq i\Omega_A \oplus i\Omega_B.
    \end{align}
    Hence, $V^\sigma_{AB}$ is a valid QCM corresponding to some Gaussian state, $\sigma_{AB}$. Moreover, it is trivial to check from the expressions above that $V^{\rho}_{AB}$ is obtained from $V^\sigma_{AB}$ by applying the attenuation channel, $\mathcal{A}_{\gamma}$ followed by conjugation by the symplectic matrix, $S$, completing the proof.  
\end{proof}
The previous lemma enables us to physically interpret the fractional extendibility of Gaussian states. For $\gamma = q/p$, any $1/\gamma$-extendable Gaussian state, $\rho_{AB}$ is obtained from some other Gaussian state $\sigma_{AB}$, by \textit{symmetrically splitting} $B$ in $p$ modes using the consecutive beam-splitter unitary of an appropriate parameter and forwarding only $q$ modes to the receiver~\footnote{For an irrational $\gamma$, one can extend this interpretation by considering a sequence of rational numbers approaching $\gamma$.}. Whenever $q$ divides $p$, we obtain the notion of usual $k$-extendibility, with $k=p/q$, as the environment can then simulate $k-1$ copies of what the receiver receives. 

The next result is more specific to repeater chains with the set-up considered as follows. Let repeater nodes $R_k$ for $1\leq k\leq r-1$ prepare bipartite local Gaussian states on $A_{k+1,k}A_{k+1,k+1}$ respectively. Then each $R_k$ sends $A_{k+1,k+1}$ to $R_{k+1}$ through the attenuation channel $\mathcal{A}_{\gamma_{k+1}}$ respectively. Afterwards, the repeater nodes proceed with an arbitrary multipartite GLOCC protocol, $\mathcal{G}:(A_{k+1,k}A_{k+1,k+1})_{k=1}^{r-1}\to (A'_{k+1,k}A'_{k+1,k+1})_{k=1}^{r-1}$. As an example, the case for three repeater nodes is illustrated in Fig.~\ref{fig:GLOCC}. 
\begin{figure}
    \centering
    \includegraphics[width=0.6\linewidth]{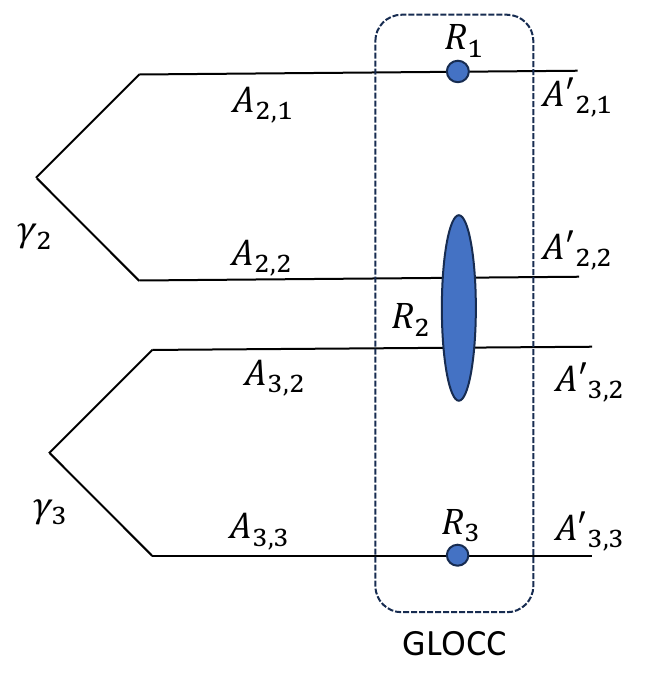}
    \caption{Each repeater node prepares a bipartite state locally and shares one part to the next node using the attenuation channel (illustrated for three nodes). Thereafter they perform an arbitrary GLOCC protocol.}
    \label{fig:GLOCC}
\end{figure}
\begin{lemma}\label{lem:glocc}
    For any Gaussian CP map separable across the repeater nodes (e.g. Fig.~\ref{fig:GLOCC}), the output state on $A'_{2,1}A_{r,r}'$ is $(\prod_{k=2}^r\gamma_k)^{-1}$-extendable.
\end{lemma}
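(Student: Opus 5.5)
The plan is to induct along the chain, using the bipartition ``node $R_1$'' versus ``everything downstream''. The extendibility factor is multiplied by $\gamma_{k+1}$ each time entanglement with $A_{2,1}$ is pushed through the link $\mathcal{A}_{\gamma_{k+1}}$, using Lemma~\ref{lem:comp-att}. All node operations in between are absorbed by Lemma~\ref{lem:monotonicity}.

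First I reduce the separable map to product form. A GLOCC protocol (or any Gaussian CP map separable across the nodes), conditioned on a fixed string of classical outcomes, is a Gaussian CP map of the form $\mathcal{G}_1\otimes\mathcal{G}_2\otimes\dots\otimes\mathcal{G}_r$. Here $\mathcal{G}_k$ acts only on the modes held by node $R_k$, namely $A_{k,k}A_{k+1,k}$, with $A_{2,1}$ for $k=1$ and $A_{r,r}$ for $k=r$. As in the remark after Lemma~\ref{lem:monotonicity}, the QCM of the conditional output does not depend on the outcome, so it suffices to prove the claim branch by branch. Two facts are used freely: maps on distinct nodes commute, and each attenuation $\mathcal{A}_{\gamma_{k+1}}$ acts only on $A_{k+1,k+1}$ before $\mathcal{G}_{k+1}$ is applied. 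These let me reorder the protocol as follows. Write $\tau^{(k)}$ for the pre-attenuation state prepared by $R_k$ on $A_{k+1,k}A^0_{k+1,k+1}$. The operations are then applied in the order $\mathcal{G}_1$, then link $2$, then $\mathcal{G}_2$, then link $3$, and so on.

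The induction hypothesis is that after processing nodes $1,\dots,k-1$ and tracing out all their outputs except $A'_{2,1}$, the state on $A'_{2,1}\,|\,A^0_{k,k}$ is $(\prod_{j=2}^{k-1}\gamma_j)^{-1}$-extendable, taking the convention that the empty product is $1$. For the base case I take the state $\tau^{(1)}$ on $A_{2,1}A^0_{2,2}$, which is trivially $1$-extendable since $\gamma=1$ in Inequality~\eqref{eq:fract-ext} merely asks $V\ge i\Omega$. I then apply $\mathcal{G}_1$ on the $A$ side. This needs an $A$-side version of Lemma~\ref{lem:monotonicity}. Its proof is the same Schur-complement argument: the $A$-block lower bound is $i\Omega_A$, which a Gaussian CP map sends to something $\ge i\Omega_{A'}$, while the $B$ block is untouched.

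The inductive step consists of two operations.

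(a) Apply $\mathcal{A}_{\gamma_k}$ to $A^0_{k,k}$. By Lemma~\ref{lem:comp-att} the state on $A'_{2,1}|A_{k,k}$ becomes $(\prod_{j=2}^{k}\gamma_j)^{-1}$-extendable.

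(b) Build the composite map $A_{k,k}\to A^0_{k+1,k+1}$ in three stages: append the Gaussian ancilla $\tau^{(k)}$, apply $\mathcal{G}_k$ to $A_{k,k}A_{k+1,k}$, and trace out node $k$'s outputs $A'_{k,k}A'_{k+1,k}$. This composite is a single (possibly trace non-increasing) Gaussian operation acting only on the $B$ side. Lemma~\ref{lem:monotonicity} therefore gives that $A'_{2,1}|A^0_{k+1,k+1}$ keeps the same extendibility.

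The crucial point is that $\mathcal{G}_k$ never touches $A^0_{k+1,k+1}$, because $R_k$ has sent that system away. So at the moment of step (a) for the next link, the whole $B$ side consists of the attenuated system, and Lemma~\ref{lem:comp-att} applies to all of $B$ rather than to a subsystem. At the end, applying $\mathcal{G}_r$ to $A_{r,r}$ is once more a $B$-side Gaussian operation, and it yields $(\prod_{k=2}^r\gamma_k)^{-1}$-extendability of the state on $A'_{2,1}A'_{r,r}$.

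I expect the main obstacle to be justifying the reordering and partial tracing rigorously for trace non-increasing, outcome-conditioned maps. The Schur-complement formula in Lemma~\ref{lem:monotonicity} must hold for unnormalised Choi operators and for the composite map described above; I would handle this through the Choi QCM in the $r\to\infty$ limit together with the regularisation mentioned in that proof's footnote. A related point is that tracing out intermediate nodes early is legitimate only because the product structure makes the marginal on the remaining systems independent of the order of operations; I would check this explicitly for the Gaussian CP maps used here.
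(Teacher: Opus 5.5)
Your proof is correct, and it takes a genuinely different route from the paper's.

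\textbf{The paper's route.} The paper never reorders the protocol. It writes the joint output QCM of all nodes as one Schur complement $(V_A+\tilde\Gamma_{AA'})/(V_A+\tilde\Gamma_A)$. It then lower-bounds the Choi QCM by the block-diagonal $\Gamma^{\text{local}}_{AA'}$, which is the Werner--Wolf form of separability. From there it peels off one link at a time: it splits $V_A+\tilde\Gamma^{\text{local}}_{AA'}\ge(1-\gamma_2)M^{(1)}+\gamma_2N^{(1)}$, applies joint concavity, splits $N^{(1)}$ again with $\gamma_3$, and so on. This gives a convex-combination bound on the output of \emph{all} nodes, which is restricted to $A'_{2,1}A'_{r,r}$ only at the end. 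In effect the paper re-derives the content of Lemmas~\ref{lem:comp-att} and~\ref{lem:monotonicity} inline for a multipartite input.

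\textbf{Your route.} Only the $A'_{2,1}A'_{r,r}$ marginal matters, so you collapse each intermediate node into a single Gaussian CP map from the mode it receives to the mode it sends. The chain then becomes an alternation of attenuations and Gaussian maps, each acting on the \emph{whole} $B$ side. As a result, the two lemmas can be used as black boxes, and the bookkeeping is a one-line induction. This is cleaner and more modular.

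\textbf{What the paper's formulation buys.} It applies verbatim to any Gaussian CP map with $\Gamma_{AA'}\ge\Gamma^{\text{local}}_{AA'}$, not only to product maps. Your phrase ``conditioned on a fixed string of classical outcomes'' does not literally cover such a map. You can close this in one line. By Schur-complement monotonicity, the output QCM dominates the one computed with $\Gamma^{\text{local}}_{AA'}$, which is the output QCM of a product of local Gaussian CP maps, so your induction applies to it. The inequality passes to the marginal on $A'_{2,1}A'_{r,r}$, and condition \eqref{eq:fract-ext} is upward closed.

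\textbf{Two anticipated difficulties that are not real.} First, the reordering and the early partial traces are just linearity of the tensor contraction. They hold for arbitrary CP maps acting on disjoint factors, Gaussian or not. Second, you do not need an $A$-side version of Lemma~\ref{lem:monotonicity}. You apply $\mathcal{G}_1$ while $\gamma=1$, and at that point \eqref{eq:fract-ext} only says the output QCM is valid.
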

The proof is given in Appendix~\ref{appsec:proof}. Now we are in a position to prove Theorem~\ref{thm:main_1}.

\textit{Proof of Theorem~\ref{thm:main_1} --} We prepare $N$ two-mode squeezed vacuum states, with QCM $V_{AB'}$, where $A$ and $B'$ each consists of $N$ modes. We send $B'$ through $\mathcal{A}_{\gamma_1}$. Using Lemma~\ref{lem:comp-att}, we conclude that the output bipartite state on $AA'$ is $1/\gamma_1$-extendable.

Furthermore, using Lemma~\ref{lem:glocc}, we find that corresponding to each branch of the GLOCC protocol, labelled by some index $m_1$, the bipartite state on the first and the last node, denoted by $\tilde{\Phi}_{m_1}$ on $A_1'A_r'$, is $1/(\gamma_2\gamma_3\dots\gamma_r)$-extendable. Now the first node teleports the state of the system $A'$ through $\tilde{\Phi}_{m_1}$ to the system, $A'_r$ that the last node possesses (See Fig.~\ref{fig:glocc-choi state}).

Now Lemma~\ref{lem:equiv-to-att} implies that there exists a Gaussian state $\Psi^{m_1}_{A_1'A_r'}$ on $A_1'A_r'$ and some Gaussian unitary channel, $\mathcal{S}_{m_1}$ such that $\tilde{\Phi}_{m_1} = \mathrm{id}_{A_1'}\otimes (\mathcal{S}_{m_1}\circ \mathcal{A}_{\gamma_2\dots\gamma_r})(\Psi^{m_1}_{A_1'A_r'})$. Furthermore as all the states and channels are Gaussian, all post-teleported states are equivalent to each other up to local displacements~\cite{GC2002}. Therefore, teleporting $A'$ through $\tilde{\Phi}_{m_1}$ is equivalent to the following sequence: \textit{(i) Teleportation of $A'$ through $\Psi^{m_1}_{A_1'A_r'}$}: The output state $AA'_r$ is $1/\gamma_1$-extendable (Lemma~\ref{lem:monotonicity}). \textit{(ii) Application of an attenuation channel, $\mathcal{A}_{\gamma_2\dots\gamma_r}$}: The output state $AA'_r$ is $(\prod_{k=1}^r\gamma_k)^{-1}$-extendable (Lemma~\ref{lem:comp-att}). \textit{(iii) Gaussian Unitary channel, $\mathcal{S}_{m_1}$}:  The output state $AA'_r$ is $(\prod_{k=1}^r\gamma_k)^{-1}$-extendable (Lemma~\ref{lem:monotonicity}).

Finally the system $A'_r$ is sent through the last attenuation channel, $\mathcal{A}_{\gamma_{r+1}}$ to the receiver $B$ along with the label $m_1$ and the teleportation outcome, collectively indexed by $\vec{m}$, on a classical register $M$. Therefore, for every message string, $\vec{m}$, the Gaussian state on $AB$ is $(\prod_{k=1}^{r+1}\gamma_k)^{-1} = \eta^{-1}$-extendable (Lemma~\ref{lem:comp-att}). 

 Therefore, using Lemma~\ref{lem:equiv-to-att} we state that the Choi state of the repeater chain corresponding to any message string $\vec{m}$ can be written as $\rho^{\vec{m}}_{AB} = \mathrm{id}_A\otimes (\mathcal{S}_{\vec{m}}\circ \mathcal{A}_{\eta}) (\sigma^{\vec{m}}_{AB})$, for some Gaussian state $\sigma^{\vec{m}}_{AB}$. Therefore, the repeater chain, $\mathcal{R}_\eta:B'\to BM$ corresponds to the Choi state $\sum_{\vec{m}}p_{\vec{m}} \rho^{\vec{m}}_{AB} \otimes \ketbra{\vec{m}}{\vec{m}} = \sum_{\vec{m}}  \mathrm{id}_A\otimes (\mathcal{S}_{\vec{m}}\circ \mathcal{A}_{\eta}) (p_{\vec{m}}\sigma^{\vec{m}}_{AB}) \otimes \ketbra{\vec{m}}{\vec{m}}$, where $p_{\vec{m}}$ is some probability distribution over the message strings. Invoking Choi-Jamio\l kowski isomorphism between the positive semidefinite operators $p_{\vec{m}}\sigma^{\vec{m}}_{AB}$ and CP maps $\mathcal{N}_{\vec{m}}:B'\to B$ such that $\sum_{\vec{m}}\mathcal{N}_{\vec{m}}$ is trace-preserving, we can write $\mathcal{R}_\eta$ as a composition of three channels: $
    \mathcal{R}_\eta = \mathcal{U} \circ (\mathcal{A}_\eta\otimes \tilde{\mathrm{id}}_M)\circ \mathcal{V}$, where, $\mathcal{V}:B'\to BM$ is given by $\mathcal{V}(\rho)=\sum_{\vec{m}}\mathcal{N}_{\vec{m}}(\rho)\otimes \ketbra{\vec{m}}{\vec{m}}$, $\mathcal{U}:BM\to BM$ is given by $\mathcal{U}(\rho)=\sum_{\vec{m}}(S_{\vec{m}}\otimes \ketbra{\vec{m}}{\vec{m}})\rho(S^\dagger_{\vec{m}}\otimes \ketbra{\vec{m}}{\vec{m}})$, and $\tilde{\mathrm{id}}_M$ is the completely dephasing channel on $M$. Using the bottleneck inequality~\cite{RMG2018}: $\mathcal{Q}(\mathcal{N}\circ \mathcal{M}) \leq \min (\mathcal{Q}(\mathcal{N}),\mathcal{Q}(\mathcal{M}))$, we can write that
\begin{align}
    \mathcal{Q}(\mathcal{R}_{\eta}) \leq \mathcal{Q}(\mathcal{A}_{\eta} \otimes \tilde{\mathrm{id}}_M) = \mathcal{Q}(\mathcal{A}_{\eta}).
\end{align}
The last equality follows from the fact that $\tilde{\mathrm{id}}_M$ is an entanglement-breaking channel, which cannot increase the quantum capacity of a channel~\cite{Barnum1998,BKN2000,Wilde2013}. Hence, we have proven the main result of the paper. \qed

\textit{Concluding remarks ---} In this work, we have established the fundamental necessity of non-Gaussian operations at the repeater station for achieving higher rate of quantum communication than that of direct transmission. We hope that the techniques and results of this paper can be further generalised to more complicated Gaussian repeater networks revealing the associated limits. A crucial component of our method comprises of introducing the idea of fractional extendibility for Gaussian states. However, an appropriate discrete variable analogue remains to be explored in a future work.

\textit{Acknowledgements ---} We thank Yunkai Wang for helpful discussions. GS is
supported under NSERC-NSF alliance grant ALLRP-586858-2023 and NSERC Discovery grant
RGPIN-2025-02094. RGA acknowledges the support of Institue for Quantum Computing and Mike and Ophelia Lazaridis Graduate Fellowship.
\nocite{*}

\bibliography{apssamp}
\appendix
\section{Proof of Lemma~\ref{lem:glocc}}\label{appsec:proof}
In this appendix, we include the proof of Lemma~\ref{lem:glocc}.
\begin{proof}
    After the repeater nodes prepare the bipartite Gaussian states locally and share one part forward, the QCM of the $r$-partite Gaussian state is given by $V_A = \bigoplus_{k=2}^r V_k$. Using Lemma~\ref{lem:comp-att}, for each $k$, we have $V_k \geq i\Omega_{A_{k,k-1}} \oplus [(1-\gamma_k)\Delta_{k,k} + i\Omega_{A_{k,k}}]$ for some QCMs $\Delta_{k,k}$. Furthermore, as the Gaussian CP map is separable across the nodes $R_1,R_2,\dots,R_r$ as in Fig.~\ref{fig:GLOCC}, the QCM of its associated Choi operator satisfies $\Gamma_{AA'}\geq \Gamma^{\text{local}}_{AA'}:=\bigoplus_{k=1}^r \Gamma_{(AA')_k}$, for some QCMs $\Gamma_{(AA')_k}$~\cite{Werner2001}. Here $(AA')_k$ denotes the joint system local to repeater node, $k$, i.e.  $(AA')_k\equiv A_{k-1,k}A_{k,k}A'_{k-1,k}A'_{k,k}$. Therefore, we can write
    \begin{align}
        &V_A +  \tilde{\Gamma}_{AA'}\nonumber\\
        &\geq \;i\Omega_{A_{2,1}}\oplus [(1-\gamma_2)\Delta_2 + i\gamma_2\Omega_{A_{2,2}}]\oplus \left(\bigoplus_{k=3}^r V_k\right) +\tilde{\Gamma}^{\text{local}}_{AA'}\nonumber\\
        &= \;(1-\gamma_2)\underbrace{\left[i\Omega_{A_{2,1}}\oplus \left(\Delta_2\oplus \left(\bigoplus_{k=3}^r V_k\right)\right)+\tilde{\Gamma}^{\text{local}}_{AA'}\right]}_{M^{(1)}_{AA'}} \nonumber\\
        &+ \gamma_2 \underbrace{\left[i\Omega_{A_{2,1}A_{2,2}}\oplus \left(\bigoplus_{k=3}^r V_k\right)+\tilde{\Gamma}^{\text{local}}_{AA'}\right]}_{N^{(1)}_{AA'}}.\label{eq:M,N}
    \end{align}

    The QCM of the output of the CP map is given by 
    \begin{align*}
        V_{A'}' = &\;(V_A+\tilde{\Gamma}_{AA'})/(V_A+\tilde{\Gamma}_A)\nonumber\\
    {\geq} &\;[(1-\gamma_2)M^{(1)}_{AA'}+\gamma_2N^{(1)}_{AA'}]/[(1-\gamma_2)M^{(1)}_A+\gamma_2N^{(1)}_A]\\
    {\geq} &\;(1-\gamma_2)(M^{(1)}_{AA'}/M^{(1)}_A) + \gamma_2 (N^{(1)}_{AA'}/N^{(1)}_A).
    \end{align*}
    The second and the third line follow from the monotonicity and the joint concavity of Schur complement, respectively. Now, using the block diagonal structure of $\Gamma^{\text{local}}_{AA'}$ and noting that $\Delta_2\oplus \left(\bigoplus_{k=3}^r V_k\right)$ is a QCM, we can write $M^{(1)}_{AA'}/M^{(1)}_A \geq i\Omega_{A'_{2,1}} \oplus \hat{V}_{1}$, for some QCM $\hat{V}_{1}$ on the output systems that nodes $R_2,\dots,R_r$ hold.

    For the second term, involving $N^{(1)}_{AA'}$, we can perform a decomposition very similar to Eq.~\eqref{eq:M,N}:
    \begin{align}
        &N^{(1)}_{AA'}\nonumber\\ &= \left[i\Omega_{A_{2,1}A_{2,2}}\oplus \left(\bigoplus_{k=3}^r V_k\right)\right] + \tilde{\Gamma}^{\text{local}}_{AA'}\nonumber\\
        &\geq (1-\gamma_3)\underbrace{\left[i\Omega_{A_{2,1}(A)_2}\oplus \left(\Delta_{3,3}\oplus \left(\bigoplus_{k=4}^r V_k\right)\right) + \tilde{\Gamma}^{\text{local}}_{AA'}\right] }_{M^{(2)}_{AA'}}\nonumber\\
        &+ \gamma_3 \underbrace{\left[i\Omega_{A_{2,1}(A)_2A_{3,3}}\oplus \left(\bigoplus_{k=4}^r V_k\right)+ \tilde{\Gamma}^{\text{local}}_{AA'}\right]}_{N^{(2)}_{AA'}}.
    \end{align}
    Again using monotonicity and joint concavity of Schur complement, we can write 
    \begin{align*}
        N^{(1)}_{AA'}/N^{(1)}_A \geq (1-\gamma_3)(M^{(2)}_{AA'}/M^{(2)}_{A}) + \gamma_3 (N^{(2)}_{AA'}/N^{(2)}_{A}).
    \end{align*}
    Using the same argument as for $M^{(1)}_{AA'}$, we obtain $(M^{(2)}_{AA'}/M^{(2)}_{A})\geq i\Omega_{A'_{2,1}(A')_2} \oplus \hat{V}_2$, for some QCM $\hat{V}_2$ on the output systems that nodes $R_3,\dots,R_r$ hold. For the second term, we again proceed with a similar decomposition and so on. At last, we would obtain
    \begin{align}
        V'_{A'} \geq &(1-\gamma_2) \left(i\Omega_{A'_{2,1}}\oplus \hat{V}_{1}\right) \nonumber\\
        +&\sum_{k=2}^{r-1}\gamma_2\dots\gamma_k(1-\gamma_{k+1})\left(i\Omega_{A'_{2,1}(A')_2\dots(A')_k} \oplus \hat{V}_k\right)\nonumber\\
        +&\gamma_2\dots \gamma_r i\Omega_{A'_{2,1}(A')_2\dots(A')_r},
    \end{align}
    where $\hat{V}_k$ are QCMs on output systems that nodes $R_{k+1},\dots,R_r$ hold. Finally, tracing out all output systems except for the ones held by the first and the last node, the inequality above yields
    \begin{align}
        V'_{1r} \geq i\Omega_{A'_{2,1}}\oplus \left[ \left(1-\prod_{k=2}^r\gamma_k\right)\hat{V}' + \prod_{k=2}^r\gamma_k i\Omega_{A'_{r,r}}\right].
    \end{align}
    Here 
    \begin{align*}
        \hat{V}'
        = &(1-\prod_{k=2}^r\gamma_k)^{-1}[(1-\gamma_2)[\hat{V}_1]_r + \sum_{k=2}^{r-1}\gamma_2\dots\gamma_k(1-\gamma_{k+1})[\hat{V}_k]_r]\nonumber\\ \geq &i\Omega_{A'_{r,r}},
    \end{align*}~\footnote{By $[\Delta]_r$, we mean the submatrix of the QCM $\Delta$ corresponding to the system $A'_{r,r}$}. Thus we prove the claim. 
\end{proof}
\end{document}